\providecommand{\tabularnewline}{\\}
\theoremstyle{plain}
\newtheorem{thm}{\protect\theoremname}
\newenvironment{proof}[1][\protect\proofname]{\par
\normalfont\topsep6\p@\@plus6\p@\relax
\trivlist
\itemindent\parindent
\item[\hskip\labelsep\scshape #1]\ignorespaces
}{%
\endtrivlist\@endpefalse
}
\providecommand{\proofname}{Proof}
\journal{arXiv.org (preprint)}
\providecommand{\theoremname}{Theorem}
\begin{document}

\begin{frontmatter}{}

\title{On the optimality of ternary arithmetic for compactness and hardware
design\tnoteref{t1}}

\tnotetext[t1]{Some explanations and derivations in equations are presented in explicit
detail, in order to be more readable as lecture material for undergraduate
students.}

\author[rvt]{Harris~V.~Georgiou (MSc,~PhD)\fnref{fn1}}

\ead[url]{http://xgeorgio.info}

\fntext[fn1]{H.G. is a post-doc researcher with NKUA and an R\&D associate with
University of Piraeus, Greece. His main area of expertise is Machine
Learning \& A.I., Signal Processing and Medical Imaging. He is also
working as an independent R\&D consultant since 1998. }

\address[hvg]{Department of Informatics \& Telecommunications (DIT),\\
National Kapodistrian University of Athens (NKUA/UoA), Greece}
\begin{abstract}
In this paper, the optimality of ternary arithmetic is investigated
under strict mathematical formulation. The arithmetic systems are
presented in generic form, as the means to encode numeric values,
and the choice of radix is asserted as the main parameter to assess
the efficiency of the representation, in terms of information compactness
and estimated implementation cost in hardware. Using proper formulations
for the optimization task, the universal constant $e$ (base of natural
logarithms) is proven as the most efficient radix and ternary is asserted
as the closest integer choice.\end{abstract}
\begin{keyword}
arithmetic systems \sep ternary arithmetic \sep computer technology

\end{keyword}

\end{frontmatter}{}

\section{Introduction}

The term \emph{arithmetic system} refers to the way a number is represented
as a sequence of symbols associated with a specific power series.
More specifically, a \emph{radix} $r$ is selected as the the base
of the arithmetic system and every number is expressed as a sum of
powers of this radix.

As humans, we learn to count in the \emph{decimal} arithmetic system,
i.e., using powers of 10, purely for practical reasons. Children learn
basic arithmetic by using their ten fingers, thus each time they count
to 10, a carrier is created and added to the next power index. For
example, the number 1,234 is actually a shortcut to the full representation:
\[
1234_{10}=1\cdot1.000+2\cdot100+3\cdot10+4=1\cdot10^{3}+2\cdot10^{2}+3\cdot10^{1}+4\cdot10^{0}
\]
The proper representation of the number orders the coefficients for
each radix power in left-to-right ranking and includes a subscript
displaying the radix. The representation may include a sign, a fractional
point and negative powers, in order to represent any real number:
\[
-12.34{}_{10}=\left(-1\right)\left(1\cdot10^{1}+2\cdot10^{0}+3\cdot10^{-1}+4\cdot10^{-2}\right)
\]

Any radix may be used instead of 10 to represent the same number in
different arithmetic systems. The general formula for representing
$x\epsilon\mathbf{R}$ is:
\begin{equation}
x=...+c_{2}\cdot r{}^{2}+c_{1}\cdot r{}^{1}+c_{0}\cdot r{}^{0}+c_{-1}\cdot r{}^{-1}+...=\sum_{k=-m}^{n}c_{k}\cdot r{}^{k}\label{eq:arithm-systems}
\end{equation}

where $r$ is the radix used, $k$ are the power indices and $c_{k}$
are appropriate coefficients to represent a given number $x$ (sign
is omitted here). It is clear that for a specified positive radix,
the corresponding representation for \emph{x} is unique. Theoretically,
both the coefficients and the radix use can be negative and/or non-integers,
although this may complicate the representation. On the other hand,
choosing the proper radix can result in more efficient and \emph{compact}
representation. For example, $1024_{10}=1\cdot2^{10}=10000000000_{2}$
but also $1024_{10}=1\cdot1024^{1}=10{}_{1024}$ and all three representations
can be used equivalently.

In every arithmetic system, the radix $r$ determines the valid range
for the $c_{k}$ coefficients, since these are upper-bounded in every
power position by the next one, and for each valid value a unique
symbol is required. Hence, in the decimal system the valid symbols
are $\left\{ 0,...,9\right\} ,$ in the binary system these are $\left\{ 0,1\right\} $,
in the hexadecimal these are $\left\{ 0,...,9,A,...,F\right\} $,
etc. In general, if the radix is chosen as a positive integer $r>1$
and $c_{k}\geq0$ then the valid range is $0\leq c_{k}\leq r-1$,
since when $c_{k}=r$ then it results in carrier $c_{k+1}=1$ and
$c_{k}=0$. 

From these basic common properties of all arithmetic systems it is
clear that using a large radix results in more compact representation,
but a larger set of symbols is required. In contrast, using a small
radix results in longer representation, but a smaller set of symbols.
This can be easily verified with arithmetic systems where the radix
of one is a multiplication of the other, for example $255_{10}=11111111_{2}=FF_{16}$.

\section{Problem definition}

The problem of defining an optimal arithmetic representation has been
around for many decades, especially in the very early years of computing
technologies. In particular, mathematicians and engineers have been
trying to propose the best arithmetic system that would be used as
the basis for computers back in the '50s and '60s \cite{Hayes2001}.

The earliest published discussion on this subject is probably in the
1950 book by U.S. Navy and Engineering Research Associates \cite{ERA1950}.
In particular, the product of the radix and the size of the symbols
set was considered a good predictor of the cost of the hardware (electronic
components) required to build digital computers. In particular, the
best radix for optimal arithmetic and compact number representation
was calculated close to 3, thus producing the notion of a \emph{ternary}
system.

Before this assertion is formally proven, one more comment should
be noted with regard to integer power series and their summations.
In Eq.\ref{eq:arithm-systems} the coefficients $c_{k}$ can take
any value in the valid range $0\leq c_{k}\leq r-1$. Without loss
of generality, we limit the problem to positive integers, hence the
maximum number that can be represented with a fixed width $w+1$ is
when $c_{k}=r-1\ ,\ \forall k=\left\{ 0,...,w\right\} $. The +1 is
for the zero-index power $r^{0}$. Hence, we can now calculate the
maximum number that can be represented with $w+1$ symbols (size)
in an arithmetic system of radix $r$ as follows:
\begin{thm}
\label{thm:arithm-sum}The maximum number $U$ that can be represented
with $w+1$ symbols (size) in an arithmetic system of radix $r>1$
is upper-bounded by $r^{w+1}.$\end{thm}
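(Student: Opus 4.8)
The plan is to compute the maximal representable number directly and show it falls just short of $r^{w+1}$. First I would observe that, since every coefficient $c_{k}$ contributes non-negatively to the power series in Eq.~\ref{eq:arithm-systems} and is constrained by $0\leq c_{k}\leq r-1$, the largest number representable with the $w+1$ fixed positions is obtained by setting every coefficient to its maximal admissible value, namely $c_{k}=r-1$ for all $k=\{0,1,\ldots,w\}$. This reduces the statement to the evaluation of a single explicit sum and is the only step that requires a word of justification, resting on the monotonicity of each term in $c_{k}$.

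Next I would write $U=\sum_{k=0}^{w}(r-1)\,r^{k}=(r-1)\sum_{k=0}^{w}r^{k}$, factoring the constant $r-1$ out of the summation. The remaining factor is a finite geometric series of ratio $r>1$, for which the closed form $\sum_{k=0}^{w}r^{k}=\frac{r^{w+1}-1}{r-1}$ applies; note the denominator is nonzero precisely because $r>1$. Substituting this closed form cancels the $(r-1)$ factor cleanly and yields the exact identity $U=r^{w+1}-1$.

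From here the bound is immediate: since $r^{w+1}-1<r^{w+1}$, the claimed upper bound $U\leq r^{w+1}$ follows at once, and the argument in fact delivers the sharper exact value $U=r^{w+1}-1$. I expect no genuine obstacle in this proof — it is essentially a routine geometric-series calculation valid for any real radix $r>1$, not merely integer radices — so the only care needed is flagging that the maximum is attained at the all-$(r-1)$ coefficient vector and that the geometric-sum formula is being invoked with its convergence/validity hypothesis ($r\neq1$) satisfied.
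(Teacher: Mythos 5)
Your proposal is correct and follows essentially the same route as the paper: maximize every coefficient at $c_{k}=r-1$ and sum the resulting series to get $U=r^{w+1}-1$. The only cosmetic difference is that the paper's proof telescopes $\sum_{k=0}^{w}r^{k+1}-\sum_{k=0}^{w}r^{k}$ directly, whereas you invoke the closed-form geometric sum $\sum_{k=0}^{w}r^{k}=\frac{r^{w+1}-1}{r-1}$ --- a derivation the paper itself presents immediately after its proof as an equivalent check, so the two are the same argument in substance.
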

\begin{proof}
Using Eq.\ref{eq:arithm-systems} and setting $c_{k}=r-1\ ,\ \forall k=\left\{ 0,...,w\right\} $,
it results to:

\begin{equation}
U=\sum_{k=0}^{w}\left(r-1\right)r{}^{k}=\sum_{k=0}^{w}r{}^{k+1}-\sum_{k=0}^{w}r{}^{k}=r^{w+1}+\left(1-1\right)\sum_{k=1}^{w}r{}^{k+1}+r^{0}=r^{w+1}-1\label{eq:arithm-limit}
\end{equation}

\end{proof}
In should be noted that the result in Eq.\ref{eq:arithm-limit} is
compatible with the sum of a geometric series \cite{Spiegel2002}
via:

\begin{equation}
\sum_{k=1}^{n}a{}^{k}=\frac{a^{n+1}-1}{a-1}\label{eq:arithm-series-sum}
\end{equation}

and thus:
\begin{eqnarray*}
\sum_{k=0}^{w}\left(r-1\right)r{}^{k} & = & \sum_{k=0}^{w}r{}^{k+1}-\sum_{k=0}^{w}r{}^{k}\\
 & = & \frac{r^{w+2}-1}{r-1}-1-\frac{r^{w+1}-1}{r-1}\\
 & = & \frac{r^{w+1}\left(r-1\right)}{r-1}-\frac{r-1}{r-1}=r^{w+1}-1
\end{eqnarray*}

According to Eq.\ref{eq:arithm-limit}, the maximum number that can
be represented with $w+1$ symbols (set size) in an arithmetic system
of radix $r$ is upper-bounded by $r^{w+1}$. This result from Theorem
\ref{thm:arithm-sum} is very useful indeed, since it is valid for
any radix $r$. For example, the maximum four-digit decimal number
is ($w=3$): $9999_{10}=10000_{10}-1=10^{4}-1$.

\section{Formalizations and proofs}

In the previous section the problem was clearly defined: \emph{Given
that the radix and the size of number representation in any arithmetic
system are inversely associated, how can we choose the ``best''
combination?} 

To understand the nature of this problem and the way to solve it,
a number can be viewed as a collection of distinct items, a set with
size that corresponds to this value. This collection can then be organized
in the form of a tree, with nodes of equal size and depth appropriate
to completely accommodate this set. Such an example for the number
39 is presented in Figure \ref{fig:ternary-tree}.

\begin{center}
\begin{figure}
\centering{}\caption{\label{fig:ternary-tree}A sample packed tree organization for exactly
39 items.}
\includegraphics[scale=0.17]{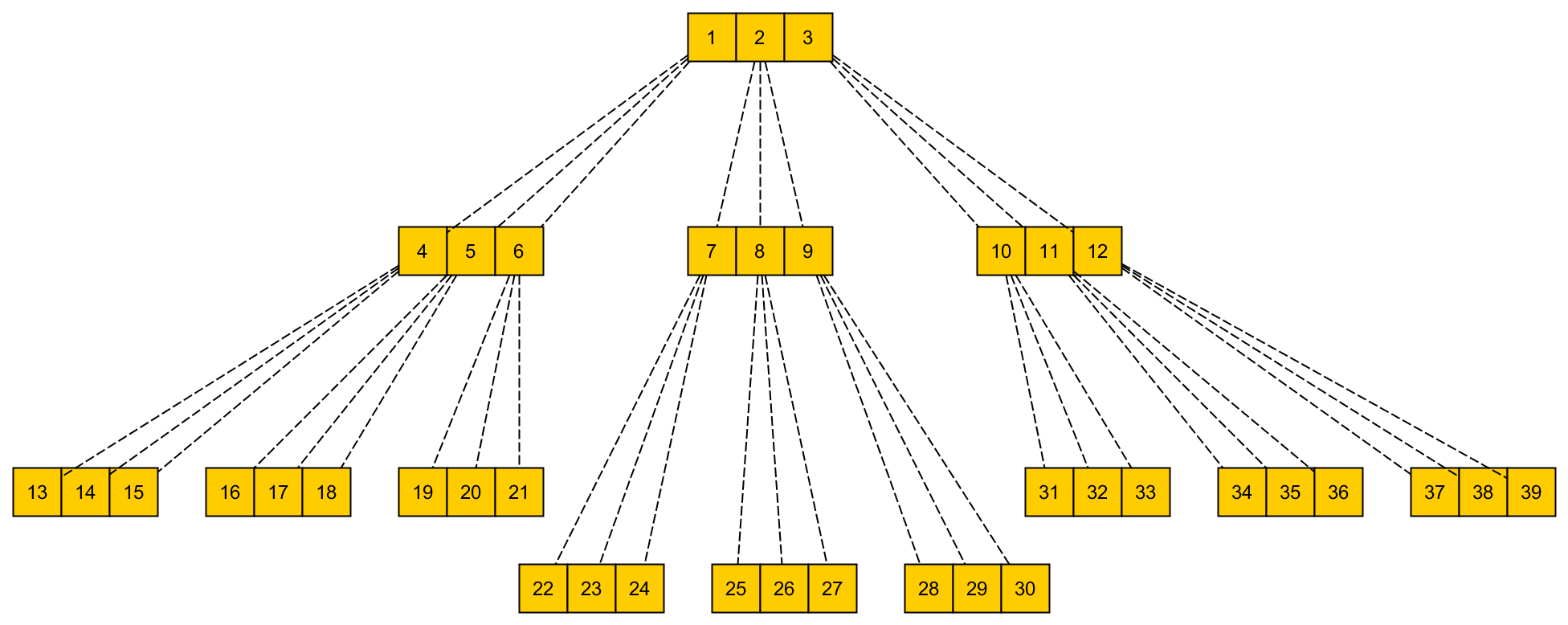}
\end{figure}

\par\end{center}

In Figure \ref{fig:ternary-tree} each node contains exactly three
items and each one of them is the root for three similar sub-trees.
The tree presented is filled completely and no more items can be added
without expanding it to greater depth. In other words, there are 39
items filling the entire tree up to a depth of three, thus the tree
is \emph{packed} and no representation more compact than this can
be created, considering this specific node size and sub-trees structure.

It is easy to see that depth $d$ contains $m^{d}$ items: $m^{1}=3$,
$m^{2}=3\cdot3=9$ and $m^{3}=3\cdot3\cdot3=27$. Hence, the sum of
all the items in the tree is: 3+9+27=39. Not surprisingly, this is
exactly what Eq.\ref{eq:arithm-series-sum} gives when $a=3$ and
$n=3$:
\[
\sum_{k=1}^{3}3{}^{k}=\frac{3^{3+1}-1}{3-1}=\frac{81-1}{2}=40
\]

where the sum 40 includes the zero-power item or ``root'' of the
tree of Figure \ref{fig:ternary-tree} (not shown). In other words,
if the upper bound that is asserted by Theorem \ref{thm:arithm-sum}
is considered as the size of a collection of distinct items, then
radix $r$ and width $w$ are associated with the node size $m$ and
the required depth $d$, respectively, if a packed tree organization
is employed. In this case, the tree depth $d$, therefore the width
$w$ too, can be estimated by employing Eq.\ref{eq:arithm-series-sum}
and using the base-$m$ logarithm:
\[
40=\frac{3^{d+1}-1}{3-1}\Rightarrow3^{d+1}=40\cdot2+1\Rightarrow d+1=\log_{3}81=4
\]

or in a more general form, using Eq.\ref{eq:arithm-series-sum}:
\begin{equation}
a^{n+1}=1+\left(a-1\right)\sum_{k=1}^{n}a{}^{k}\Rightarrow d=\log_{a}\left(1+\left(a-1\right)\sum_{k=1}^{n}a{}^{k}\right)-1\label{eq:series-sum-inv}
\end{equation}

What Eq.\ref{eq:series-sum-inv} tells us is that the (minimum) depth
of such a packed tree is directly related to the node size. Considering
this tree organization as a metaphor for the upper bound $U=r^{w+1}-1$
that is asserted by Theorem \ref{thm:arithm-sum}, the width $w$
can be calculated in a similar way as in Eq.\ref{eq:series-sum-inv}:
\begin{equation}
U=r^{w+1}-1\Rightarrow\left(w+1\right)\log_{r}r=\log_{r}\left(U+1\right)\Rightarrow w=\log_{r}\left(U+1\right)-1\label{eq:arithm-sum-inv}
\end{equation}

For example, the number $U=255_{10}$ represented with $r=2$ (binary)
requires: $w=\log_{2}\left(256\right)-1=7$. This is exactly what
Theorem \ref{thm:arithm-sum} states, with $w+1=8$ coefficients for
the power indices $\left\{ 2^{7},...,2^{0}\right\} $ and $U=11111111_{2}=255_{10}$. 

Using the tree structure metaphor and the problem formalization presented
above as an optimization task, the following theorem describes the
solution for the ``best'' arithmetic system in terms of information
packing and representation efficiency:
\begin{thm}
\label{thm:E1-base-n}Under the minimization criterion $E_{1}\left(r,w\right)=r\cdot w$
and subject to $r^{w}=C\neq0$ (constant), where $r>1$ is the radix
and $w\ \epsilon\left[0,\ r-1\right]$, the optimal information packing
for number representation is achieved for an arithmetic system with:
$r=e$, the base of the natural logarithm ($e=2.718,281,828,459...$).\end{thm}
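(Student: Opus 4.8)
The plan is to turn this two-variable constrained minimization into an elementary single-variable calculus problem by using the constraint to eliminate $w$. The constraint $r^{w}=C$ gives, upon taking natural logarithms, $w\ln r=\ln C$, hence $w=\ln C/\ln r$ (this is just Eq.\ref{eq:arithm-sum-inv} with the $+1$ absorbed into the constant). Substituting into the objective yields
\[
E_{1}(r)=r\cdot w=\ln C\cdot\frac{r}{\ln r}.
\]
Since $C$ is a fixed constant representing a number worth encoding, so that $C>1$ and $\ln C>0$, and since $r>1$ gives $\ln r>0$, the factor $\ln C$ is a strictly positive constant that does not affect the location of the minimizer. The problem therefore collapses to minimizing $g(r)=r/\ln r$ over $r>1$.

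First I would differentiate $g$ by the quotient rule to obtain
\[
g'(r)=\frac{\ln r-1}{(\ln r)^{2}}.
\]
The denominator is strictly positive for $r>1$, so the sign of $g'$ equals the sign of $\ln r-1$; setting $g'(r)=0$ yields $\ln r=1$, i.e. the unique critical point $r=e$. Next I would confirm this is a genuine (global) minimum rather than a maximum or inflection: the sign analysis is immediate, since $g'<0$ on $1<r<e$ (where $\ln r<1$) and $g'>0$ on $r>e$ (where $\ln r>1$), so $g$ decreases then increases, attaining its unique minimum at $r=e$. One could equivalently verify $g''(e)>0$, or note that $g(r)\to+\infty$ both as $r\to1^{+}$ and as $r\to\infty$, forcing an interior minimum. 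Back-substitution then gives the optimal width $w=\ln C$ and minimal cost $E_{1}=e\ln C$.

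The computation itself is routine, so the only genuine points of care — the ``hard part'' such as it is — are two. The first is the justification that $\ln C$ may be factored out as a positive constant, which tacitly requires $C>1$; this should be stated explicitly so the minimizer of $E_1$ coincides with the minimizer of $g$. The second is reconciling the stated feasible band $w\in[0,\,r-1]$ with the interior critical point: one should either check that the optimum $w=\ln C$ lands inside this band for the representative magnitudes of $C$ under consideration, or argue that the constraint is non-binding so that the interior stationary point governs. As a self-contained alternative that sidesteps solving the constraint explicitly, I would note that the Lagrange-multiplier method applied to $E_{1}=rw$ with $r^{w}-C=0$ gives the stationarity equations $w=\lambda w r^{w-1}$ and $r=\lambda r^{w}\ln r$; eliminating $\lambda$ between them reduces to $r=r\ln r$, hence $\ln r=1$ and $r=e$, recovering the same conclusion.
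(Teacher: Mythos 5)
Your proposal follows essentially the same route as the paper: use the constraint to eliminate $w$, obtaining $E_{1}=\ln C\cdot r/\ln r$, differentiate by the quotient rule, and solve $\ln r=1$ to get $r=e$. Your version is in fact slightly more complete than the paper's, since you verify minimality via the sign change of $g'$ (and flag the tacit requirement $C>1$ so that $\ln C>0$), whereas the paper stops at the stationary point and appeals to the plot in Figure \ref{fig:cost-function-plot} for the claim that it is a unique minimum.
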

\begin{proof}
The goal is to minimize: $E_{1}\left(r,w\right)=r\cdot w$, s.t. $r^{w}=C\neq0$
(constant). Since $r>1$, from the constraint we have: 

\begin{equation}
r^{w}=C\Leftrightarrow w\cdot\log_{a}r=\log_{a}C\Leftrightarrow w=\frac{\log_{a}C}{\log_{a}r}\label{eq:w-param-log}
\end{equation}

Therefore, substituting Eq.\ref{eq:w-param-log} in the minimization
criterion, then taking the first derivative \cite{Spiegel2002} and
calculating for root(s), gives:

\begin{eqnarray}
\frac{\partial\left(E_{1}\left(r,w\right)\right)}{\partial r}=0 & \Rightarrow & \frac{\partial}{\partial r}\left(r\cdot\frac{\log_{a}C}{\log_{a}r}\right)=\log_{a}C\left(\frac{\frac{\partial r}{\partial r}\cdot\log_{a}r-r\cdot\frac{\partial\log_{a}r}{\partial r}}{\log_{a}^{2}r}\right)=0\nonumber \\
 & \Rightarrow & \frac{\ln C}{\ln a}\left(\frac{\frac{\ln r}{\ln a}-\frac{r}{r\cdot\ln a}}{\left(\frac{\ln r}{\ln a}\right)^{2}}\right)=\frac{\ln C}{\ln a}\left(\frac{\frac{1}{\ln a}\left(\ln r-1\right)}{\frac{1}{\ln a}\cdot\frac{\ln^{2}r}{\ln a}}\right)=0\label{eq:E1-analytical0}\\
 & \Rightarrow & \ln C\left(\frac{1}{\ln r}-\frac{1}{\ln^{2}r}\right)=\frac{\ln C}{\ln r}\left(1-\frac{1}{\ln r}\right)=0\label{eq:E1-analytical1}
\end{eqnarray}

From Eq.\ref{eq:E1-analytical0} it becomes clear that the calculation
will eventually reduce to the simple: $\ln r-1=0$. It is worth noting
that in Eq.\ref{eq:E1-analytical1} there is no reference to $a$,
i.e., the base of the logarithm used in Eq.\ref{eq:w-param-log} is
indeed irrelevant, as expected. From here, the derivation of the root
is straight-forward:

\begin{eqnarray}
\frac{\partial\left(E_{1}\left(r,w\right)\right)}{\partial r}=0 & \Rightarrow & \frac{\ln C}{\ln r}\left(1-\frac{1}{\ln r}\right)=0\nonumber \\
 & \Rightarrow & \ln r=1\Rightarrow e^{\ln r}=e^{1}\Rightarrow r=e\label{eq:radix-e-final}
\end{eqnarray}

\end{proof}
Figure \ref{fig:cost-function-plot} illustrates the plot of $E_{1}\left(r,w\right)$
cost function (using $\ln\left(.\right)$ and $\ln C=1$) against
radix $r$ and the optimal solution at $r=e$. This is the representation
``cost'' measured as the product of node size and tree depth, i.e.,
analogous to the 2-D ``size'' of the associated tree, if a specific
number (upper bound) $C$ is treated as a collection of distinct items.
The shape of the function makes it clear that there is a well-defined
and unique minimum, at the point where the radix is equal to the base
of the natural logarithm. It is very important to point out that,
under this very generic formalization of the problem, the optimal
solution is independent to both the actual upper bound $C$, as well
as the base of the logarithm used in the proof.

\begin{center}
\begin{figure}
\centering{}\caption{\label{fig:cost-function-plot}Plot of the $E_{1}\left(r,w\right)$
cost function ($\ln C=1$) and the optimal solution at $r=e$.}
\includegraphics[scale=0.75]{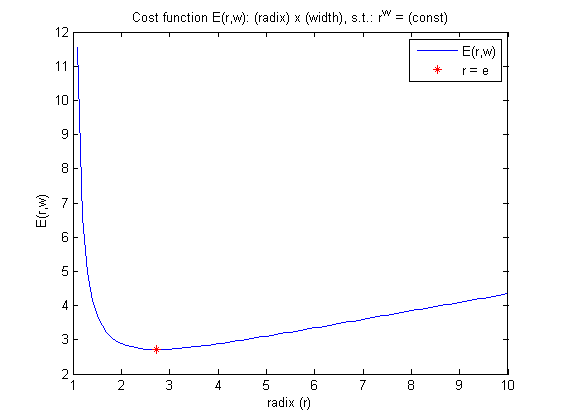}
\end{figure}

\par\end{center}

The question now becomes: \emph{How generic is this assertion?} In
other words, can we use some other optimality criterion and derive
some other solution for the ``best'' radix?

In order to investigate this, we can try and formulate an alternative
cost function and work in a similar way. Although the tree structure
metaphor leads naturally to a product between radix $r$ and width
$w$, the sum of these two parameters may be used as an alternative
test case. Theorem \ref{thm:E2-base-n} exploits this possibility
with $E_{2}\left(r,w\right)$ and states the corresponding solution(s):
\begin{thm}
\label{thm:E2-base-n}Under the minimization criterion $E_{2}\left(r,w\right)=r+w$
and subject to $r^{w}=C\neq0$ (constant), where $r>1$ is the radix
and $w\ \epsilon\left[0,\ r-1\right]$, the optimal information packing
for number representation is achieved for an arithmetic system with
$r$ equal to the root(s) of: $f\left(r\right)=r\cdot\ln^{2}r-\ln C=0$
($C$-dependent).\end{thm}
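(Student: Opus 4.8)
The plan is to follow exactly the same reduction used in the proof of Theorem~\ref{thm:E1-base-n}, since the constraint is identical and only the objective changes from a product to a sum. First I would use the constraint $r^{w}=C$ to eliminate $w$, reusing Eq.~\ref{eq:w-param-log} to write $w=\frac{\log_{a}C}{\log_{a}r}=\frac{\ln C}{\ln r}$, where the logarithm base $a$ again cancels once the natural logarithm is adopted. Substituting this into the objective collapses the two-variable problem into a single-variable function $E_{2}\left(r\right)=r+\frac{\ln C}{\ln r}$, so the optimisation reduces to elementary calculus in $r$ alone.

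Next I would compute the first derivative and locate its root(s). The only non-routine piece is the derivative of the second term, $\frac{\partial}{\partial r}\bigl(\frac{1}{\ln r}\bigr)=-\frac{1}{r\ln^{2}r}$, after which $\frac{\partial E_{2}}{\partial r}=1-\frac{\ln C}{r\ln^{2}r}$. Setting this equal to zero and clearing the denominator gives precisely the stated stationarity condition $r\ln^{2}r-\ln C=0$, i.e.\ $f\left(r\right)=0$, which is the entire algebraic content of the theorem: the optimal radix is characterised as a root of $f$.

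The step that deserves the most care is the \emph{interpretation} rather than the differentiation. Unlike the $E_{1}$ case, where $\ln C$ factored out cleanly and left the $C$-independent condition $\ln r=1$ with the universal solution $r=e$, here $\ln C$ enters additively inside the derivative and cannot be removed. The first-order condition is therefore a genuinely transcendental equation whose root depends on $C$, and I would emphasise that the theorem only asks for this characterisation and not for a closed form (the root admits a Lambert-$W$ expression but no elementary one in general). To complete the argument rigorously I would also confirm that such a root is a minimiser rather than a maximiser: for the relevant case $C>1$ we have $\ln C>0$, so $E_{2}\left(r\right)\to\infty$ both as $r\to1^{+}$ (since $\ln r\to0^{+}$) and as $r\to\infty$ (since $r\to\infty$), forcing the unique interior turning point to be the sought minimum, which a sign check on the second derivative then verifies.
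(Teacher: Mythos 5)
Your proposal follows essentially the same route as the paper's own proof: eliminate $w$ via Eq.~\ref{eq:w-param-log}, differentiate $r+\frac{\ln C}{\ln r}$ with respect to $r$, and obtain the stationarity condition $r\cdot\ln^{2}r=\ln C$, which is exactly the chain in Eqs.~\ref{eq:E1-analytical0-1}--\ref{eq:E1-analytical1-1}. Your closing check that the stationary point is indeed a minimum (via the boundary behaviour of $E_{2}$ as $r\to1^{+}$ and $r\to\infty$, plus monotonicity of $r\ln^{2}r$ giving uniqueness) is a small but genuine addition in rigor that the paper leaves implicit.
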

\begin{proof}
The goal is to minimize: $E_{2}\left(r,w\right)=r+w$, s.t. $r^{w}=C\neq0$
(constant). Again, since $r>1$, from the constraint we have Eq.\ref{eq:w-param-log}.
Substituting in the minimization criterion, then taking the first
derivative \cite{Spiegel2002} and calculating for root(s), gives:

\begin{eqnarray}
\frac{\partial\left(E_{2}\left(r,w\right)\right)}{\partial r}=0 & \Rightarrow & \frac{\partial}{\partial r}\left(r+\frac{\log_{a}C}{\log_{a}r}\right)=1+\log_{a}C\cdot\frac{\partial}{\partial r}\left(\frac{1}{\log_{a}r}\right)=0\nonumber \\
 & \Rightarrow & 1+\ln C\cdot\frac{\partial\left(\ln^{-1}r\right)}{\partial r}=1-\frac{\ln C}{\ln^{2}r}\cdot\frac{\partial\ln r}{\partial r}=0\label{eq:E1-analytical0-1}\\
 & \Rightarrow & 1-\frac{\ln C}{r\cdot\ln^{2}r}=0\Rightarrow r\cdot\ln^{2}r=\ln C\label{eq:E1-analytical1-1}
\end{eqnarray}

\end{proof}
From Eq.\ref{eq:E1-analytical0-1} it becomes clear that the constant
term $\ln C$ will not be removed from the final calculation, which
is confirmed in Eq.\ref{eq:E1-analytical1-1}. As before, it is worth
noting that in Eq.\ref{eq:E1-analytical1-1} there is no reference
to $a$, i.e., the base of the logarithm used in Eq.\ref{eq:w-param-log}
is indeed irrelevant. 

Figure \ref{fig:cost-function-plot2} illustrates the plot of the
first derivative $\frac{\partial E_{2}\left(r,w\right)}{\partial r}$
of the cost function against radix $r$. From Eq.\ref{eq:E1-analytical1-1}
it is evident that the curve is $C$-dependent, therefore the minimization
solutions are too. The plot shows the corresponding curves for various
values of $C$ and Table \ref{tab:E2-solutions} shows the exact solutions
of the associated minimization. Again, the solution is independent
to the base of the logarithm used in the proof, but \emph{dependent}
to the actual upper bound $C$. Therefore, this choice of cost function
does not lead to a generic solution, although the partial solutions
remain asymptotically close to four.

\begin{center}
\begin{figure}
\centering{}\caption{\label{fig:cost-function-plot2}Plot of the $\frac{\partial E_{2}\left(r,w\right)}{\partial r}$
cost function, which is actually $C$-dependent.}
\includegraphics[scale=0.75]{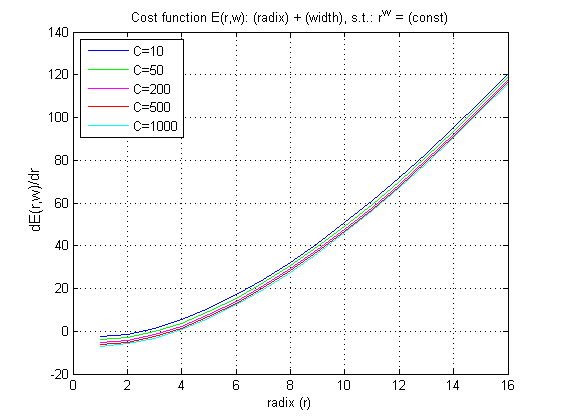}
\end{figure}

\par\end{center}

\begin{center}
\begin{table}

\centering{}\caption{\label{tab:E2-solutions}Sample solutions to $\frac{\partial E_{2}\left(r,w\right)}{\partial r}=0$
for various values of $C$.}
\begin{tabular}{|c|c|}
\hline 
$C$ & $r$\tabularnewline
\hline 
\hline 
$e$ & 1.4215\tabularnewline
\hline 
10 & 2.5746\tabularnewline
\hline 
50 & 3.0841\tabularnewline
\hline 
200 & 3.4519\tabularnewline
\hline 
500 & 3.6724\tabularnewline
\hline 
1000 & 3.8303\tabularnewline
\hline 
\end{tabular}
\end{table}

\par\end{center}

\section{Discussion}

As stated earlier, the formalization of the cost function $E_{1}\left(r,w\right)=r\cdot w$
in Theorem \ref{thm:E1-base-n} is compatible with the tree organization
metaphor of number representation in any arithmetic system. It is
also the actual cost estimator that was employed in the early years
of computing back in the '50s and '60s to predict the hardware cost
of implementing such electronic processing and memory systems \cite{ERA1950}.
The ternary arithmetic, not the binary, seemed to be the right choice
for the computers of the next decades.

Indeed, the first ternary-arithmetic computer\footnote{Actually, one early ternary machine was built entirely from wood by
Thomas Fowler in 1840 and it operated in balanced ternary \cite{TernaryCompWikip2016}.} was built at Moscow State University during the race of the Cold
War by Nikolai P. Brusentsov and his colleagues \cite{Hayes2001,TernaryCompWikip2016}.
It was named ``Setun'' and 50 such machines were built between 1958
and 1965. It operated with arithmetic units of 18 ternary digits or
\emph{trits} (instead of \emph{bits}), producing a numerical range
of $3^{18}=387,420,489$ integer numbers. In contrast, according to
Eq.\ref{eq:w-param-log} a modern binary computer requires $18\cdot\frac{\ln3}{\ln2}=28.529...<29$
bits to represent the same range. However, there were no three-state
electronics circuitry at the time and the machines were built using
two pairs of magnetic cores, i.e., inherently a four-state device
to implement ternary arithmetic. Obviously, this approach produces
hardware that is 25\% less efficient than pure binary and, as a result,
the Setun project was far from success.

Nevertheless, the ternary approach seemed to have many advantages,
not only in relation to information packing in memory circuitry but
also in terms of robust flow control in programs (e.g. three-way ordering/logic
comparisons in a single step). In the '60s several test projects were
developed for ternary logic gates and memory cells for building more
and more complex digital units, such as adders and multiplexers. In
1973 Gideon Frieder and his colleagues at the State University of
New York designed and implemented (in Fortran) a software emulator
for ``Ternac'' \cite{TernacWikip2016}, a ternary computer similar
to Setun.

Donald E. Knuth, one of the founders of modern programming and Informatics
in general, has wrote in his seminal book ``The Art of Computer Programming''
(1968) \cite{Knuth1968,Knuth1997} that ternary arithmetic is \emph{``...perhaps
the prettiest number system of all...''}, explaining the numerous
advantages over binary or any other system. Despite the theoretical
arguments, the ternary approach never made it to mainstream computer
manufacturing, mostly due to the advent of cheap solid-state circuitry
of binary logic that revolutionized the digital technology of the
20th century.

In the last two decades or so, quantum computing \cite{QuantumCompWikip2016}
seems to finally get into solid foundations and hardware implementations,
although it is still constrained to only few \emph{qubits}. One of
the main advantages of the quantum approach to computing is the inherent
multi-valued logic that can be implemented directly in single memory
cells and logic gates \cite{Kubucki2008,QuantumArithm2008}. Obviously,
ternary is a serious candidate as the base of quantum arithmetic,
if the implementation cost of each qubit continues to be even loosely
relevant to the arguments presented here.

%\section*{References }

\bibliographystyle{elsarticle-num}
\addcontentsline{toc}{section}{\refname}\bibliography{refs}

\end{document}